\documentclass[conference]{IEEEtran}

\usepackage{amsmath,amssymb,amsthm,mathtools}       
\newtheorem{theorem}{Theorem}[section]

\newtheorem{proposition}[theorem]{Proposition}
\newtheorem{corollary}[theorem]{Corollary}

\ifCLASSINFOpdf
\else
\fi
\begin{document}
%
\title{FISGuard: Defending Against Membership Inference via Fixed Input Subspaces}

\author{
	\IEEEauthorblockN{Haocheng Jiang}
	\IEEEauthorblockA{
		Hubei University of Technology \\
		jianghaocheng@hbut.edu.cn
	}
	\and
	\IEEEauthorblockN{Hua Shen}
	\IEEEauthorblockA{
		Hubei University of Technology \\
		nancy78733@126.com
	}
}
	

%


\maketitle

\begin{abstract}
As large language models are increasingly adopted in federated learning, protecting user privacy while performing parameter-efficient fine-tuning on distributed private data has become an important challenge. Although clients only share gradients instead of directly uploading raw data, the shared gradients may still leak membership information about training samples. ProjRes (S\&P, 2026) further increases this risk: with less information and without accessing model outputs, an attacker can effectively distinguish members from non-members solely based on the projection residual between a candidate representation and the subspace induced by server-observable gradients. Existing defenses against membership inference mostly rely on gradient perturbation or regularization, which can not only degrade model utility but also fail to effectively defend against the membership inference attack introduced by ProjRes, which exploits the geometric structure of gradients. 

To address this issue, we propose FISGuard, a lightweight defense. Its key idea is to construct and fix a low-dimensional representation subspace using independent public data, thereby restricting the space through which private representations are exposed via gradients while preserving the primary information required for downstream tasks. This substantially reduces the projection-residual discrepancy between members and non-members. 

We evaluate FISGuard against five representative defense methods across three NLP datasets, two LLMs, and two fine-tuning strategies, Adapter and LoRA. The results show that FISGuard reduces the ProjRes attack AUC to near the random-guessing level of 0.5 in most settings, while maintaining downstream task performance close to that of the undefended model and introducing only limited computational overhead, thereby achieving a favorable privacy--utility trade-off.
\end{abstract}


\begin{IEEEkeywords}
	Federated Learning, Large Language Models, Fine-tuning, Membership Inference Attacks.
\end{IEEEkeywords}

%
\IEEEpeerreviewmaketitle

\section{Introduction}
Federated large language models enable multiple clients to collaboratively fine-tune a pretrained model without sharing their local data \cite{HLL2025,KQL2024}. To reduce the training and communication costs on clients, existing systems commonly adopt parameter-efficient fine-tuning methods, such as Adapters and LoRA, which update only a small number of additional parameters \cite{DCL2026}. However, despite their compact size and structured form, the gradients of these parameters may still reveal information about clients’ training data\cite{DLZ2025}.
Membership inference attacks aim to determine whether a candidate record was used to train a target client’s model. Membership itself may constitute sensitive information, as it can reveal whether a user appears in a medical, financial, or other privacy-sensitive dataset \cite{SSS2017}.

ProjRes performs membership inference against federated large language models by exploiting the gradient structure of trainable linear layers. Its key observation is that the gradient of a linear layer is jointly determined by local sample representations and backpropagated error signals. Consequently, the representations of training samples contribute to the construction of the gradient row space. ProjRes \cite{DCL2026} extracts this space from server-visible gradients and infers membership based on the projection residual of a candidate representation with respect to the gradient space. Because member samples directly participate in gradient computation, their representations can typically be explained more accurately by the gradient space. In contrast, non-member representations are more likely to retain larger components outside that space.

This attack is not limited to Adapters. Any fully connected mapping that directly receives private representations, is trained by the client, and exposes its gradients to the server may create a similar input-side gradient attack surface. The Down projection in a standard Adapter, the input-side low-rank factor in LoRA, and ordinary fully connected layers involved in private fine-tuning all exhibit this basic structure. The original ProjRes attack achieves strong membership discrimination across all evaluated model–dataset combinations. Nevertheless, existing research still lacks a defense specifically designed around the general linear-gradient structure exploited by ProjRes.

Existing defenses against membership inference primarily include output protection, training regularization, gradient perturbation, and secure aggregation \cite{HSS2022}. Methods such as output perturbation, confidence modification, and knowledge distillation mainly restrict an attacker’s ability to extract membership information from prediction probabilities or loss values. ProjRes, however, does not rely on final model outputs and instead directly analyzes client-uploaded gradients. Even when a model produces similar predictions for members and non-members, the structural relationship exploited by ProjRes may persist as long as private representations continue to participate in the construction of the visible gradient space.

Training regularization \cite{MIH2021} and early stopping \cite{PL2002} generally reduce membership leakage by mitigating overfitting. However, the basis of ProjRes is not entirely attributable to model memorization. Once a private sample participates in even a single gradient computation, its representation may become embedded in the resulting gradient structure. Therefore, reducing overfitting alone cannot fundamentally prevent private representations from forming a gradient space observable to the server.

Gradient clipping \cite{CWH2020} and differential privacy \cite{DC2006} provide more general privacy protection, but they operate primarily by perturbing client updates. To disrupt the gradient directions and subspace structure exploited by ProjRes, the added noise must substantially alter the geometry of the gradients, which may further intensify the privacy–utility trade-off in federated large language models. Secure aggregation \cite{BIK2016} changes the attacker’s observation model by hiding individual client updates. However, it relies on the additional protocol assumption that the server can observe only the aggregate of multiple client updates, and it does not alter the leakage structure inherent in an individual client’s gradient. Overall, existing defenses either fail to directly address the gradient structure observed by ProjRes or require random perturbation or additional system assumptions.

These limitations motivate a different question: instead of concealing information after gradients have already been generated, can we redesign the parameterization of linear layers before private training begins, such that the system no longer produces the gradient structure required by ProjRes?

This question follows directly from the mechanism of ProjRes. ProjRes can distinguish members from non-members because a client uses a limited number of local samples to form an incomplete gradient subspace within a high-dimensional representation space. Member representations participate in constructing this subspace and are therefore more likely to lie within it, whereas non-member representations are more likely to occupy uncovered directions. Accordingly, the effectiveness of ProjRes depends on two conditions. First, the original private representations must directly enter server-visible gradients. Second, the gradient space must cover only part of the candidate representation space, thereby producing different projection residuals for members and non-members.

An effective defense must therefore address both conditions. First, it should prevent the original high-dimensional private representations from directly generating server-visible input-side gradients. Second, after removing this direct attack surface, it should constrain the remaining gradients to a sufficiently low-dimensional and fixed space that can be fully covered by the gradient span. In this setting, both member and non-member representations can be fully represented within the remaining gradient space, leaving the original ProjRes attack without projection residuals that can be used for membership ranking.

Based on this insight, we propose a fixed input-subspace reparameterization framework for trainable linear mappings. A task-specific linear update is expressed as $\Delta W = UQ,$
where $Q$ is an input-side projection determined before private training begins and permanently frozen thereafter, while $U$ is the only output-side mapping trained by the client. The original representation is first transformed by $Q$ into a fixed low-dimensional coordinate space, and private client data are used only to learn how these coordinates should be mapped to the output space.

The framework reconstructs the attack surface of ProjRes through three consecutive steps. First, it freezes the input-side projection, preventing the original high-dimensional private representations from directly producing observable input-side weight gradients. Second, it moves the remaining server-visible gradients from the original high-dimensional representation space into a fixed low-dimensional space. Third, it ensures that this low-dimensional space can be fully covered by the remaining gradient span, thereby eliminating the out-of-space residuals used by the original ProjRes attack to distinguish members from non-members.

This design is not a local modification specific to a particular Adapter parameter. Rather, it is a general principle applicable to trainable fully connected mappings. For a bottleneck Adapter, the framework can be instantiated as a Fixed-Down Adapter, in which the Down projection is frozen and only the Up projection is trained. For LoRA, the input-side low-rank factor can be fixed while only the output-side factor is trained. For a general fully connected layer, the gradient space can be reconstructed in the same way by introducing a low-rank residual branch with a fixed input projection. We use the Fixed-Down Adapter and LoRA as the primary implementations and experimental vehicles, but the theoretical formulation and scope of the method are not limited to these two architectures.

To preserve useful representation information within the fixed subspace, we construct the input projection using an independent public dataset rather than an arbitrary random matrix. The public data determine which input directions the model is allowed to use, while the private data are used only to learn how these directions should be combined for the downstream task. This separation ensures that the input-side geometry is determined before private training begins and does not adapt to client samples. We also employ function-preserving initialization so that the newly introduced low-rank branch does not alter the original forward behavior of the pretrained model at the start of training.

Compared with existing defenses, the proposed framework offers three advantages. First, it directly targets the gradient attack surface exploited by ProjRes, rather than modifying model outputs or attack scores after the leakage signal has already formed. Second, it does not rely on random gradient noise, thereby avoiding the need to select a noise magnitude and the corresponding utility degradation caused by perturbation. Third, it changes only the organization of trainable parameters and does not require modifications to the client-side task loss, the server aggregation rule, or the model inference interface. In the Adapter instantiation considered in this work, public data are used to construct a fixed Down projection, while only the Up projection and task head are trained, making the method compatible with existing federated training pipelines.

The main contributions of this work are as follows:

\begin{itemize}
	\item  We provide a unified analysis of ProjRes from the perspective of general trainable fully connected layers and show that its attack capability originates from private input representations directly participating in the construction of a server-observable gradient space. Based on this analysis, we identify two key conditions required for the original ProjRes attack to achieve membership discrimination: direct gradient exposure of the original private representations and incomplete coverage of the candidate representation space by the gradient space.
	\item We propose a fixed input-subspace reparameterization framework that decomposes a linear update into a fixed input projection and a trainable output mapping. The framework first removes the original high-dimensional input-side gradient and then constrains the remaining gradient to a fixed low-dimensional space that can be fully covered. We further instantiate the framework as a Fixed-Down Adapter and provide a unified formulation for extending it to LoRA and general fully connected residual updates.
	\item  Using the Fixed-Down Adapter as a representative implementation, we conduct experiments on BERT, GPT-2, and multiple text classification tasks to evaluate the proposed framework in terms of membership privacy protection, task performance, and applicability across different models and datasets.
\end{itemize}

\section{Preliminary}
\subsection{Federated Fine-tuning of Large Language Models}
Federated fine-tuning of large language models (FedLLM fine-tuning) refers to the application of federated learning to the downstream adaptation of pre-trained large language models (LLMs), enabling multiple clients to collaboratively fine-tune a shared model without directly exchanging their local raw data \cite{LM2020,HCB2021}. A central server first initializes a pre-trained language model and distributes it to the participating clients. Each client independently fine-tunes the received model using its local private dataset and uploads only the updated model parameters or parameter updates to the server. The server then aggregates the updates from participating clients to obtain a new global model, which is subsequently redistributed for the next communication round. This process iteratively follows the paradigm of \emph{global model distribution--local fine-tuning--parameter uploading--server aggregation} until the global model converges.

Formally, consider a federated learning system consisting of $K$ clients. The $k$-th client owns a local dataset $\mathcal{D}_k$ containing $n_k$ samples, and the total number of samples across all clients is
$
	N = \sum_{k=1}^{K} n_k.
$
The objective of federated LLM fine-tuning is to optimize a global model without centralizing the local datasets, which can be formulated as
\[
	\min_{\Theta} F(\Theta)
	=
	\sum_{k=1}^{K}
	\frac{n_k}{N}
	F_k(\Theta),
\]
where $\Theta$ denotes the model parameters involved in fine-tuning, and $F_k(\Theta)$ denotes the local training objective of client $k$. Specifically,
\[
	F_k(\Theta)
	=
	\frac{1}{n_k}
	\sum_{(x_i,y_i)\in\mathcal{D}_k}
	\ell\left(f_{\Theta}(x_i),y_i\right),
\]
where $f_{\Theta}$ denotes the model parameterized by $\Theta$, and $\ell(\cdot)$ denotes the task-specific loss function.

At communication round $t$, the server distributes the current global model $\Theta^{t}$ to the participating clients. Each client initializes its local model with $\Theta^{t}$ and performs local fine-tuning on $\mathcal{D}_k$, yielding the locally updated model $\Theta_k^{t+1}$. Under the Federated Averaging (FedAvg) framework, the server aggregates the local models according to their local data sizes:
\[
	\Theta^{t+1}
	=
	\sum_{k=1}^{K}
	\frac{n_k}{N}
	\Theta_k^{t+1}.
\]
The resulting $\Theta^{t+1}$ is then used as the global model for the next communication round.

For modern LLMs, directly fine-tuning and communicating all model parameters can incur substantial computational and communication overhead. Therefore, parameter-efficient fine-tuning (PEFT) is commonly integrated into federated LLM fine-tuning. In this setting, the model parameters can be decomposed as
$\Theta = \{\theta, \phi\},$
where $\theta$ denotes the frozen parameters of the pre-trained backbone, and $\phi$ denotes a small set of trainable parameters introduced or selected for downstream adaptation. Accordingly, the federated optimization objective becomes
\[
	\min_{\phi}
	F(\phi;\theta)
	=
	\sum_{k=1}^{K}
	\frac{n_k}{N}
	F_k(\phi;\theta),
\]
where $\theta$ remains fixed throughout the fine-tuning process and only $\phi$ is locally optimized.

At round $t$, client $k$ receives the global PEFT parameters $\phi^t$ and performs local optimization on its private dataset:
$
	\phi_k^{t+1}
	=
	\operatorname{LocalTrain}
	\left(\theta,\phi^t,\mathcal{D}_k\right).
$
The locally updated PEFT parameters are then uploaded to the server and aggregated as
$
	\phi^{t+1}
	=
	\sum_{k=1}^{K}
	\frac{n_k}{N}
	\phi_k^{t+1}.
$
Therefore, instead of repeatedly communicating the entire LLM, parameter-efficient federated fine-tuning only requires the trainable parameters $\phi$ to participate in local optimization and federated aggregation.

In this work, we consider two representative PEFT mechanisms, namely Adapter and Low-Rank Adaptation (LoRA). For Adapter-based fine-tuning, a lightweight bottleneck module is inserted into the pre-trained Transformer while the backbone parameters remain frozen. Given a hidden representation $\mathbf{h}\in\mathbb{R}^{d}$, an Adapter can be expressed as
$
	\operatorname{Adapter}(\mathbf{h})
	=
	\mathbf{h}
	+
	W_{\mathrm{up}}
	\sigma\left(
	W_{\mathrm{down}}\mathbf{h}
	\right),
$
where
$
	W_{\mathrm{down}}\in\mathbb{R}^{r\times d},
	W_{\mathrm{up}}\in\mathbb{R}^{d\times r},
	r \ll d,
$
and $\sigma(\cdot)$ denotes a nonlinear activation function. In this case, the trainable parameter set $\phi$ mainly consists of the parameters of the inserted Adapter modules.
For LoRA-based fine-tuning, the pre-trained weight matrix $W_0\in\mathbb{R}^{d_{\mathrm{out}}\times d_{\mathrm{in}}}$ is kept frozen, while its task-specific update is parameterized by two low-rank matrices:
$
	W
	=
	W_0 + \Delta W
	=
	W_0 + BA,
$
where
$
	A\in\mathbb{R}^{r\times d_{\mathrm{in}}},
	B\in\mathbb{R}^{d_{\mathrm{out}}\times r},
	r \ll \min(d_{\mathrm{in}},d_{\mathrm{out}}).
$

Thus, only the low-rank matrices $A$ and $B$ are optimized during local fine-tuning and communicated during federated aggregation. Under this unified formulation, $\phi$ represents the trainable Adapter parameters in Adapter-based fine-tuning or the low-rank parameters in LoRA-based fine-tuning.

\section{Related Work}
Existing defenses against membership inference attacks mainly reduce privacy leakage by mitigating model overfitting, introducing differential privacy, and perturbing membership-related information. These methods have also been gradually extended to federated learning settings. However, attack signals have expanded from prediction outputs to gradients, training trajectories, and internal representations. Existing defenses therefore still face trade-offs among privacy protection, model utility, and system overhead. In particular, they have difficulty addressing gradient leakage caused by parameter-efficient fine-tuning in federated large language models.  

One category of membership inference defenses reduces the behavioral gap between members and non-members by mitigating model overfitting and memorization of training samples. Ying et al.\cite{YZL2020} combine L2 regularization with differential privacy. Their method reduces membership leakage by suppressing overfitting and perturbing model outputs. Nasr et al.\cite{NSH2018} further propose Adversarial Regularization. They formulate training of the target model and the membership inference attack model as a min-max adversarial game. This allows the target model to maintain classification performance while learning output distributions that make members and non-members harder to distinguish. Li and Ribeiro\cite{LLR2021} propose MMD+Mix-up. It uses Mix-up to reduce the generalization gap and applies MMD regularization to constrain the prediction distributions of the training and validation sets. For federated learning, Ahmed et al.\cite{ASH2025} propose MemberShield. It combines soft labels with early stopping to reduce overconfidence and sample memorization in client models. These methods are relatively easy to implement and require only limited changes to the model architecture. However, their effectiveness largely depends on the correlation between membership leakage and model overfitting. They also lack rigorous privacy guarantees. As a result, exploitable membership information may still be exposed under white-box, trajectory-based, or adaptive membership inference attacks.  

Another category of methods reduces membership leakage by weakening the direct association between sensitive samples and the final model, or by limiting the model's memorization of specific training samples. Hu et al.\cite{HLL2022} propose DMIG. It trains a generative model on sensitive data and then uses synthetic data to train the final classifier. This prevents the target model from directly learning real training samples. Jarin and Eshete\cite{JE2022} propose MIAShield, which partitions the training data across multiple submodels. During inference, it actively excludes models that have been trained on the queried sample. This weakens membership signals caused by sample memorization. Bai et al.\cite{BLZ2026} further propose CoFedMID for multi-round trajectory-based MIAs in federated learning. It combines class-guided data partitioning, utility-aware sample compensation, and aggregation-cancelable perturbation. These mechanisms jointly reduce the persistent exposure of sensitive samples during training across multiple clients. Such methods can directly weaken the relationship between the model and sensitive samples. However, they often require additional generative models, model ensembles, sample management, or client coordination mechanisms. These requirements increase training, storage, and communication costs, while also making system implementation more complex.  

Differential privacy is a widely used formal privacy-preserving mechanism in federated learning. Naseri et al.\cite{NHD2020} systematically evaluate local differential privacy (LDP) and central differential privacy (CDP). Their results show that gradient clipping and noise injection can reduce the success rates of membership inference and some backdoor attacks. CDP generally provides a better privacy-utility trade-off. For different federated learning settings, Hu et al.\cite{HWL2024} propose PDP-FLames. It exploits the sparsity of knowledge graph gradients to privately select and perturb active gradients. Zhao et al.\cite{ZLS2025} propose ADPF, which combines particle swarm optimization with a Stackelberg game. It dynamically adjusts the privacy budget according to client heterogeneity and attack behavior. To reduce the performance degradation caused by gradient clipping and random noise in DP-FL, Shi et al.\cite{SWS2025} propose DP-FedSAM and DP-FedSAM-topk. These methods use Sharpness-Aware Minimization and update sparsification to improve robustness to DP noise. Yuan et al.\cite{YCW2025} further propose DP-FedPUAC. It uses adaptive clipping thresholds and dynamic local iterations to improve the privacy-utility trade-off and communication efficiency. From a robustness perspective, Qi et al.\cite{QWH2024} point out that DP noise may interfere with the detection of malicious updates. They therefore propose Robust-DPFL, which identifies poisoning updates based on differences in gradient element distributions. Although some of these studies do not specifically target membership inference, they share the goal of improving the practicality and robustness of DP-FL. Nevertheless, differential privacy still involves a significant privacy-utility trade-off. Stronger privacy protection usually requires stricter clipping and larger amounts of random noise. This can degrade model accuracy, convergence speed, and training stability.  

In addition to directly applying differential privacy, some studies reduce the information exposed during training or apply targeted perturbations to potential membership signals. Abourayya et al.\cite{AKR2025} propose FEDCT. Clients share only hard labels for public unlabeled data, without exchanging model parameters or soft confidence scores. Its extension, DP-FEDCT, further randomizes the discrete labels. This reduces membership leakage while lowering communication overhead. Yang et al.\cite{YYH2023} propose Client-level Input Perturbation (CIP). It learns personalized input perturbations for different clients and keeps them locally. The goal is to make the model behavior of members and non-members more similar. At the model-parameter level, Shen et al.\cite{SMX2024} propose MemDefense. It identifies parameters that have little effect on classification performance but carry strong membership signals. Pruning-based perturbation is then applied to weaken white-box MIAs. Another work, Accuracy-Lossless Model Perturbation\cite{YFF2022}, allows the server to perturb the distributed model using secret random parameters. The perturbations are exactly canceled after aggregation. This hides the true model information while preserving model updates that are nearly identical to those of the original FedAvg. Compared with generic random noise, these methods can suppress membership information in a more targeted manner and often achieve a better privacy-utility balance. However, they usually rely on specific information-sharing schemes, perturbation structures, or system assumptions. Their generalizability across different federated learning architectures and against stronger adaptive attacks therefore remains limited.  

Beyond conventional attacks, Deng et al.\cite{DCL2026} reveal a new structural membership privacy risk in federated large language models (FedLLMs). They propose ProjRes, a passive membership inference attack based on projection residuals. ProjRes exploits the relationship between sample hidden representations and client-uploaded gradients. It can infer membership from only a single round of gradients. This makes it particularly effective for FedLLMs that use parameter-efficient fine-tuning methods such as Adapters and LoRA. ProjRes shows that the hidden representations of training samples can be indirectly identified through the subspace formed by gradients of trainable parameters. Existing defenses mainly rely on random noise, gradient clipping, or sparsification. However, weakening this type of attack often comes at the cost of model performance. Therefore, effectively disrupting the relationship between hidden representations and gradients while preserving model utility remains an important open problem in membership privacy protection for FedLLMs.

\begin{figure*}[ht]
	\centering
	\includegraphics[width=0.85\linewidth]{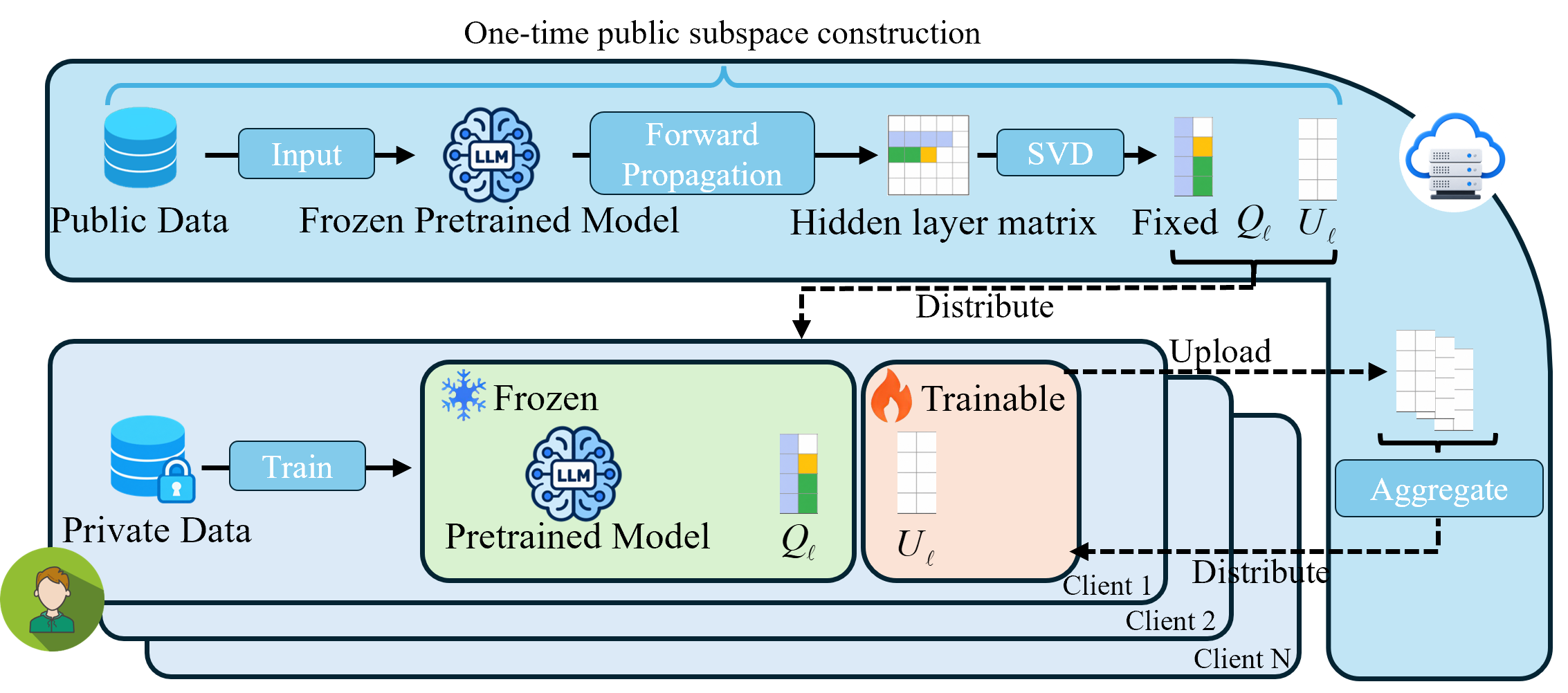}
	\caption{Overview of the proposed FISGuard.}
	\label{fig:fisguard_overview}
\end{figure*}

\section{FISGuard}

Figure~\ref{fig:fisguard_overview} provides an overview of FISGuard. The proposed method is not tied to a particular Adapter or LoRA design. It applies to any linear layer that is trained on the client, receives representations produced from private client samples, and exposes its update to the server. The method first identifies such layers from their common gradient structure, then builds a fixed input projection from public data, and finally rewrites each task-specific update as a fixed input projection followed by a trainable output mapping. Adapter and LoRA are two concrete implementations of this general construction.\\
\textbf{Step 1: Construct Fixed Input Subspaces and Reparameterize Target Linear Updates.}\\
Before federated training begins, the server performs a one-time model preparation step using only public data. The key idea is to remove trainable mappings that directly receive high-dimensional private representations, replace them with fixed public projections, and leave only the subsequent output mappings trainable on clients. In this way, clients can still adapt the model to their local tasks without learning or uploading the input-side mappings exploited by ProjRes.\\
\textbf{Step 1.1: }
The server first identifies trainable linear layers whose updates can directly expose private input representations and collects them in
$\mathcal{L}_{\mathrm{tar}}.$
Consider a linear layer
$y_\ell = W_\ell x_\ell + b_\ell,$
where \(x_\ell \in \mathbb{R}^{d_\ell}\) is the representation entering the layer and \(W_\ell \in \mathbb{R}^{m_\ell \times d_\ell}\) is its weight matrix. For a local batch of \(B\) samples, the weight gradient is
$
\nabla_{W_\ell}\mathcal{L}
=
\sum_{i=1}^{B}
\delta_{\ell,i} x_{\ell,i}^{\top},
$
where \(\delta_{\ell,i}\) is the error signal backpropagated to the layer output.

This gradient structure is important because each private representation \(x_{\ell,i}\) directly contributes to the uploaded update. As a result, the row space of the gradient is related to the space spanned by the private representations. ProjRes exploits exactly this relation: it measures how well a candidate representation lies in the gradient row space and uses the resulting projection residual for membership inference.

We therefore protect a linear layer when three conditions hold: (1) the layer is trained on the client, (2) it directly receives representations generated from private samples, and (3) its update is observable by the server. This criterion is independent of a specific fine-tuning architecture and can apply to attention projections, feed-forward linear layers, adaptation modules, and linear task heads.

For each target layer \(\ell\), the input dimension \(d_\ell\) is determined by the model architecture, while the server selects a smaller subspace dimension \(r_\ell\), with
$r_\ell < d_\ell.$\\
\textbf{Step 1.2:}
After identifying where protection is needed, the server must construct a replacement for the original trainable input-side mapping. Rather than choosing this mapping randomly, the server uses public data to identify the dominant representation directions of each target layer.

Specifically, the server prepares a public dataset
$\mathcal{D}_{\mathrm{pub}},$
which is independent of all client data. It freezes the pretrained model, performs forward passes on the public samples, and records the representation entering each target layer. For layer \(\ell\), the representation produced by the \(j\)-th public sample is denoted by
$x_{\ell,j}^{\mathrm{pub}}
\in
\mathbb{R}^{d_\ell}.$
The collected representations are stacked into
\[
X_\ell^{\mathrm{pub}}
=
\begin{bmatrix}
	\left(x_{\ell,1}^{\mathrm{pub}}\right)^{\top}\\
	\left(x_{\ell,2}^{\mathrm{pub}}\right)^{\top}\\
	\vdots\\
	\left(x_{\ell,N_{\mathrm{pub}}}^{\mathrm{pub}}\right)^{\top}
\end{bmatrix}
\in
\mathbb{R}^{N_{\mathrm{pub}} \times d_\ell}.
\]\\
At this stage, the server only collects representations. No client gradients are used and no model parameters are updated. The matrix \(X_\ell^{\mathrm{pub}}\) simply provides a public view of the representation space at the target layer and serves as the basis for constructing the fixed projection in the next step.\\
\textbf{Step 1.3:}
The server next extracts the dominant directions from the public representations. It applies singular value decomposition to
$X_\ell^{\mathrm{pub}}
=
P_\ell \Sigma_\ell V_\ell^{\top},$
and selects the right singular vectors associated with the largest \(r_\ell\) singular values. These directions are stacked to form
$Q_\ell
=
V_{\ell,1:r_\ell}^{\top}
\in
\mathbb{R}^{r_\ell \times d_\ell}.$
Because the selected directions are orthonormal,
$Q_\ell Q_\ell^{\top}
=
I_{r_\ell}.$
For any representation \(x_\ell\), the projection
$
z_\ell
=
Q_\ell x_\ell
\in
\mathbb{R}^{r_\ell}
$ keeps only its coordinates along these dominant public directions.
Thus, \(Q_\ell\) can be viewed as a fixed dimensionality-reduction mapping derived from public representations. It does not store individual public samples. Instead, it captures the main directions along which the pretrained representations vary. Compared with a random fixed projection, this construction is intended to preserve more of the information already encoded by the pretrained model while reducing the representation from \(d_\ell\) dimensions to \(r_\ell\) dimensions.\\
\textbf{Step 1.4: }
Once \(Q_\ell\) has been constructed, the server uses it to replace the trainable input-side mapping at the target layer. Instead of learning an unrestricted task-specific update \(\Delta W_\ell\), we parameterize it as
$
\Delta W_\ell
=
U_\ell Q_\ell,
$
where $U_\ell
\in
\mathbb{R}^{m_\ell \times r_\ell}$ is the only layer-specific mapping optimized on private client data.
The resulting transformation becomes
$
y_\ell
=
W_{0,\ell} x_\ell
+
U_\ell Q_\ell x_\ell,
$ where \(W_{0,\ell}\) is the frozen pretrained weight. Equivalently, the computation can be understood in two stages
$
z_\ell
=
Q_\ell x_\ell,
$
followed by
$
\Delta y_\ell
=
U_\ell z_\ell.
$
The roles of the two matrices are therefore separated clearly. The fixed matrix \(Q_\ell\) determines which directions of the high-dimensional input are retained, while the trainable matrix \(U_\ell\) learns how to use those retained directions for the downstream task. The server freezes both \(W_{0,\ell}\) and \(Q_\ell\), and only \(U_\ell\) is optimized on clients.

For an ordinary fully connected layer that does not already contain a low-rank adaptation structure, the server adds \(U_\ell Q_\ell\) as a task-specific branch on top of the frozen pretrained weight
$
y_\ell
=
W_{0,\ell} x_\ell
+
U_\ell Q_\ell x_\ell.
$
For a module that already contains an input projection followed by an output projection, the same principle is applied directly: the input-side mapping is replaced by the fixed public projection \(Q_\ell\), while the output-side mapping remains trainable.
To ensure that this additional branch does not change the pretrained model at initialization, the server sets
$
U_\ell^{0}
=
0.
$
Then
$
U_\ell^{0} Q_\ell x_\ell
=
0,
$
and therefore
$
y_\ell
=
W_{0,\ell} x_\ell.
$
The model thus starts from the original pretrained function and gradually learns task-specific corrections through \(U_\ell\).

\paragraph{Bottleneck Adapter.}

A standard Bottleneck Adapter computes
$
\Delta y_\ell
=
W_{\mathrm{up},\ell}
\sigma\!\left(W_{\mathrm{down},\ell}x_\ell\right).
$
\(W_{\mathrm{down},\ell}\) is the first matrix that acts on the high-dimensional representation \(x_\ell\). It determines how the original representation is compressed into the bottleneck space. The following matrix \(W_{\mathrm{up},\ell}\) then maps this low-dimensional representation back to the output space.
FISGuard uses the identity activation, \(\sigma(a)=a\), in the protected Adapter branch. We replace the trainable down projection with the fixed public projection,
$
W_{\mathrm{down},\ell}
=
Q_\ell,
$
and keep the up projection trainable,
$
W_{\mathrm{up},\ell}
=
U_\ell.
$
The Adapter branch therefore becomes
$
\Delta y_\ell
=
U_\ell Q_\ell x_\ell.
$
The public projection \(Q_\ell\) fixes the retained input directions. The trainable mapping \(U_\ell\) learns how to use the resulting low-dimensional representation. This yields the Fixed-Down Adapter instantiation of FISGuard.

\paragraph{LoRA.}

Standard LoRA writes the task-specific weight update as
$
\Delta W_\ell
=
B_\ell A_\ell,
$
which gives
$
\Delta y_\ell
=
B_\ell A_\ell x_\ell.
$
The factor \(A_\ell\) is the first matrix applied to the high-dimensional input and maps it into the low-rank space. The factor \(B_\ell\) then maps this low-rank representation to the output space. Thus, \(A_\ell\) and \(B_\ell\) play the same input-side and output-side roles as \(Q_\ell\) and \(U_\ell\), respectively.
We therefore set
$
A_\ell
=
Q_\ell,
B_\ell
=
U_\ell,
$
which yields
$
\Delta y_\ell
=
U_\ell Q_\ell x_\ell.
$
Again, the client no longer learns the input-side low-rank projection. The public projection \(Q_\ell\) remains fixed, while only \(U_\ell\) is optimized during federated training. This gives the LoRA instantiation of the same general reparameterization principle.

Both Bottleneck Adapter and LoRA therefore reduce to the common form
$
\Delta W_\ell
=
U_\ell Q_\ell.
$
They are two examples of the proposed method rather than its scope. More generally, the same construction can be applied whenever a task-specific linear update can be expressed as a fixed input projection followed by a trainable output mapping.\\
\textbf{Step 2: Distribute Global Trainable Parameters.}
After constructing the fixed input subspaces, the system enters multi-round federated training. At the beginning of each round, the server selects participating clients and sends the current trainable parameters
$
\Phi
=
\left\{U_\ell\mid \ell\in\mathcal{L}_{\mathrm{tar}}\right\}
\cup\Theta_{\mathrm{aux}},
$
where $\Theta_{\mathrm{aux}}$ denotes other task parameters that are not target linear weights. If any linear weight in this set also satisfies the conditions in Step 1.1, it must be included in $\mathcal{L}_{\mathrm{tar}}$ rather than left unprotected.
The base weights $W_{0,\ell}$ and fixed projections $Q_\ell$ are shared across clients and remain frozen throughout training. Each client initializes its local model with the received global parameters.\\
\textbf{Step 3: Perform Local Training in the Fixed Input Subspace.}
Each participating client trains on its private dataset $\mathcal{D}_k$. For every target layer $\ell$, the client computes
$
z_\ell=Q_\ell x_\ell,
\Delta y_\ell=U_\ell z_\ell,
$
which gives
$
y_\ell=W_{0,\ell}x_\ell+U_\ell Q_\ell x_\ell.
$
The client updates only $U_\ell$ and $\Theta_{\mathrm{aux}}$, while
\[
\nabla_{W_{0,\ell}}\mathcal{L}=0,
\qquad
\nabla_{Q_\ell}\mathcal{L}=0.
\]
Private data therefore no longer train an input matrix that acts directly on the original high-dimensional representation. They only train how the fixed low-dimensional representation should be mapped to the task output. The gradient of the output mapping is
$
\nabla_{U_\ell}\mathcal{L}
=
\sum_{i=1}^{B}\delta_{\ell,i}
\left(Q_\ell x_{\ell,i}\right)^{\top}.
$
The representation appearing in this gradient is the low-dimensional projected representation rather than the original high-dimensional input.\\
\textbf{Step 4: Upload Local Updates.}
After local training, client $k$ uploads only the updates of the output mappings and other trainable parameters
$
\Delta\Phi_k
=
\left\{\Delta U_{\ell,k}\mid \ell\in\mathcal{L}_{\mathrm{tar}}\right\}
\cup\Delta\Theta_{\mathrm{aux},k}.
$
Because $Q_\ell$ remains fixed, the client does not upload its gradient or parameter change. The input-side parameters that directly receive high-dimensional private representations are therefore absent from the server-visible update set. For Adapter, the client uploads only the up-projection update; for LoRA, it uploads only the output-side low-rank factor; other target linear layers upload their corresponding $U_\ell$ updates.\\
\textbf{Step 5: Aggregate Client Updates.}
The server receives the local updates and applies an existing federated aggregation rule. Under FedAvg, the aggregation weight of client $k$ is
$
p_k
=
\frac{|\mathcal{D}_k|}
{\sum_{j\in\mathcal{S}}|\mathcal{D}_j|},
$
where $\mathcal{S}$ is the set of participating clients in the current round. The server updates
$
\Phi
\leftarrow
\Phi+
\sum_{k\in\mathcal{S}}p_k\Delta\Phi_k.
$
The fixed projections $Q_\ell$ and base weights $W_{0,\ell}$ remain unchanged during aggregation. The server then starts the next round and repeats parameter distribution, local training, update upload, and aggregation until the training budget is exhausted or the model converges.

In summary, Step 1 is executed once before federated training, whereas Steps 2--5 are repeated in every round. The method does not change the basic federated optimization and aggregation workflow; it changes only how target linear layers are parameterized and which parameters clients train and upload.

\section{Experiments}\label{sec:experiments}
Experimental Environment. All experiments were conducted on a workstation running Windows 11, equipped with a 13th-generation Intel Core i5-13600KF CPU operating at 3.50 GHz, an NVIDIA GeForce RTX 3090 Ti GPU with 24 GB of VRAM, and 64 GB of system memory.

\subsection{Experimental Setup}
\textbf{Datasets.} 
We conduct experiments on three widely used NLP datasets: CoLA, SST-5, and IMDb. CoLA is a binary linguistic acceptability classification dataset consisting of English sentences labeled as grammatically acceptable or unacceptable. We use 8,551 samples for training and 1,043 samples for validation. SST-5, derived from the Stanford Sentiment Treebank, is a fine-grained sentiment classification dataset with five sentiment categories ranging from very negative to very positive. We use 8,544 training samples and 1,101 validation samples. IMDb is a binary sentiment classification dataset composed of movie reviews labeled as positive or negative. From its official 25,000-sample training set, we randomly split 22,500 samples for training and 2,500 samples for validation.

\textbf{Models.}
We consider two representative Transformer-based language models: BERT-Base and GPT-2 Large. BERT-Base is a bidirectional Transformer encoder with 12 layers, a hidden size of 768, and 12 attention heads, containing approximately 110M parameters. GPT-2 Large is an autoregressive Transformer language model with 36 layers, a hidden size of 1,280, and approximately 762M parameters.

\textbf{Baselines.}
We consider five representative defense methods as baselines: \textbf{DP-SGD}, \textbf{L2 Regularization}, \textbf{Min-Max}, \textbf{Dropout}, and \textbf{Label Smoothing}. 
DP-SGD limits the influence of individual training samples on model updates through gradient clipping and noise injection. 
L2 Regularization constrains the magnitude of PEFT parameters to mitigate overfitting. 
Min-Max jointly optimizes the target model and a membership inference adversary to explicitly reduce the distinguishability between members and non-members. 
Dropout randomly masks intermediate features during training to reduce feature co-adaptation and sample-specific memorization, while Label Smoothing softens the training targets to alleviate overconfident predictions on training samples. 
Collectively, these baselines mitigate membership information leakage from complementary perspectives, including gradient perturbation, regularization, and adversarial training, and serve as representative defenses for comparison with FISGuard.

\textbf{Hyperparameter Settings.}\label{sec:hy_setting}
We use a unified training configuration across all defense methods to ensure a fair comparison. Specifically, the federated learning system consists of $30$ clients, all of which participate in every communication round, and training proceeds for $50$ communication rounds. Each client performs one local update step per round with a local batch size of $16$. We use a learning rate of $2\times10^{-4}$ for the PEFT modules. The learning rate of the task-specific classification head is set to $5\times10^{-4}$ for CoLA and IMDB and $1\times10^{-3}$ for SST-5. Gradients are clipped to a global norm of $1.0$. In addition, we use a public reference set containing $2{,}000$ samples. For each defense baseline, we sweep its primary defense-specific hyperparameter while keeping all other training configurations unchanged. For DP-SGD, we fix the clipping norm to $C=1.0$ and $\delta=10^{-5}$, and vary the noise multiplier over $\sigma \in {0.1, 0.25, 0.5, 1, 2}$. For L2 Regularization, we sweep the regularization coefficient over $\lambda \in {10^{-4}, 10^{-3}, 10^{-2}, 10^{-1}}$. For Min-Max, we vary the adversarial loss weight over $\lambda \in {0.1, 0.3, 1, 3, 10}$, set the adversary learning rate to $3\times10^{-3}$, perform five adversary updates per mini-batch, use a hidden dimension of $128$, and sample $256$ held-out examples as non-member references. For Dropout, we consider dropout rates $p \in {0.05, 0.1, 0.2, 0.4}$, while for Label Smoothing, we vary the smoothing coefficient over $\alpha \in {0.02, 0.05, 0.1, 0.2, 0.4}$. For membership inference evaluation, we construct a balanced evaluation set containing $100$ member samples and $100$ non-member samples.

\begin{figure*}[htbp]
	\centering
	\includegraphics[width=\linewidth]
	{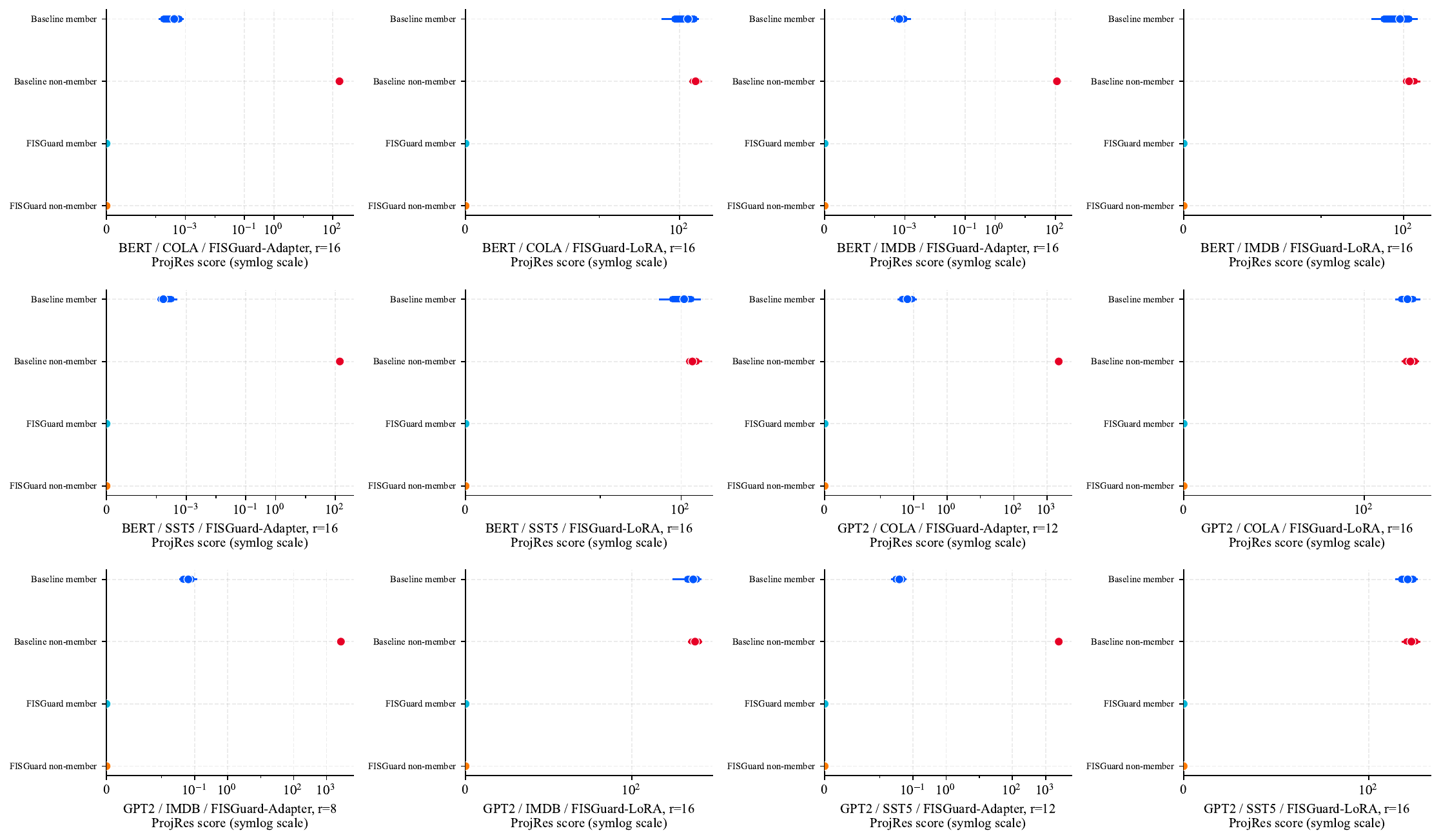}
	\caption{Projection residual score distributions of member and non-member samples with and without FISGuard.}
	\label{fig:member_nonmember_score}
\end{figure*}

\begin{figure}[htbp]
	\centering
	\includegraphics[width=\linewidth]
	{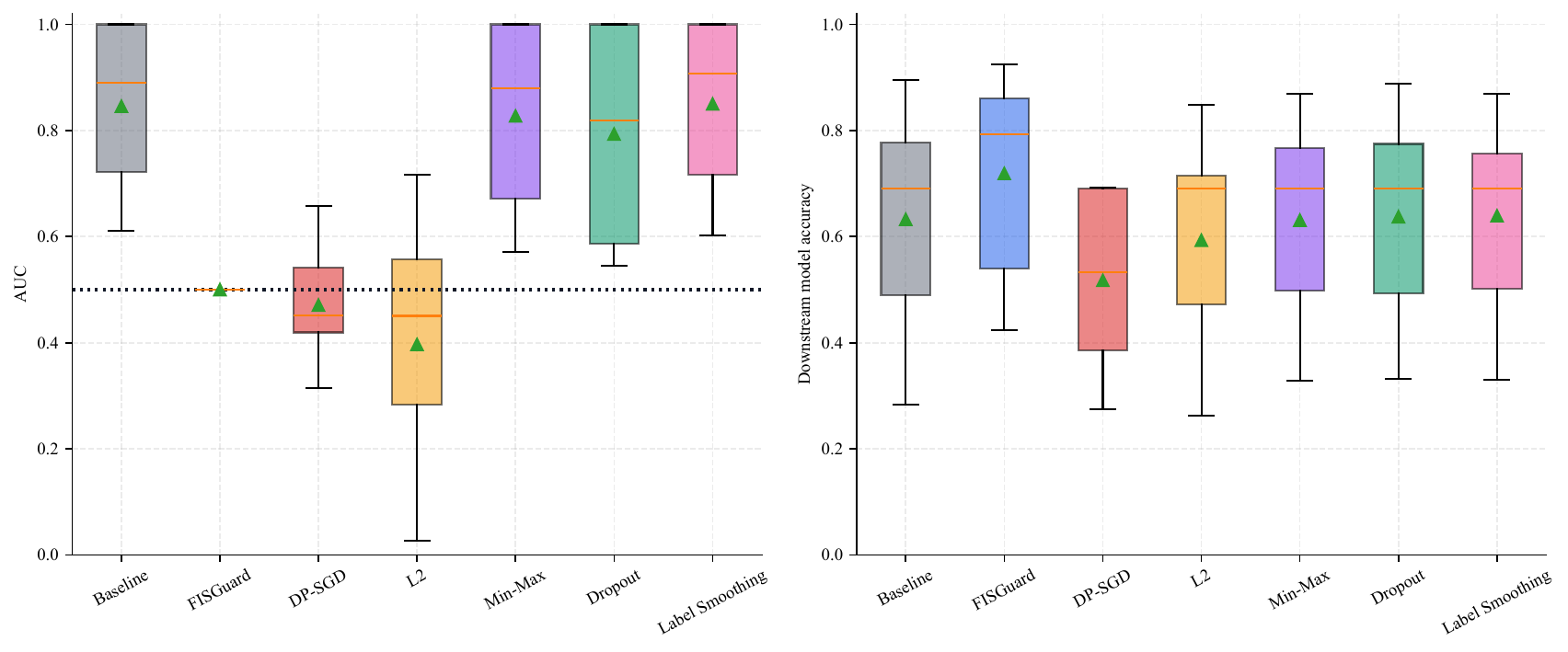}
	\caption{Overall comparison of ProjRes attack AUC and downstream model accuracy.}
	\label{fig:overall_performance}
\end{figure}

\begin{figure}[htbp]
	\centering
	\includegraphics[width=\linewidth]
	{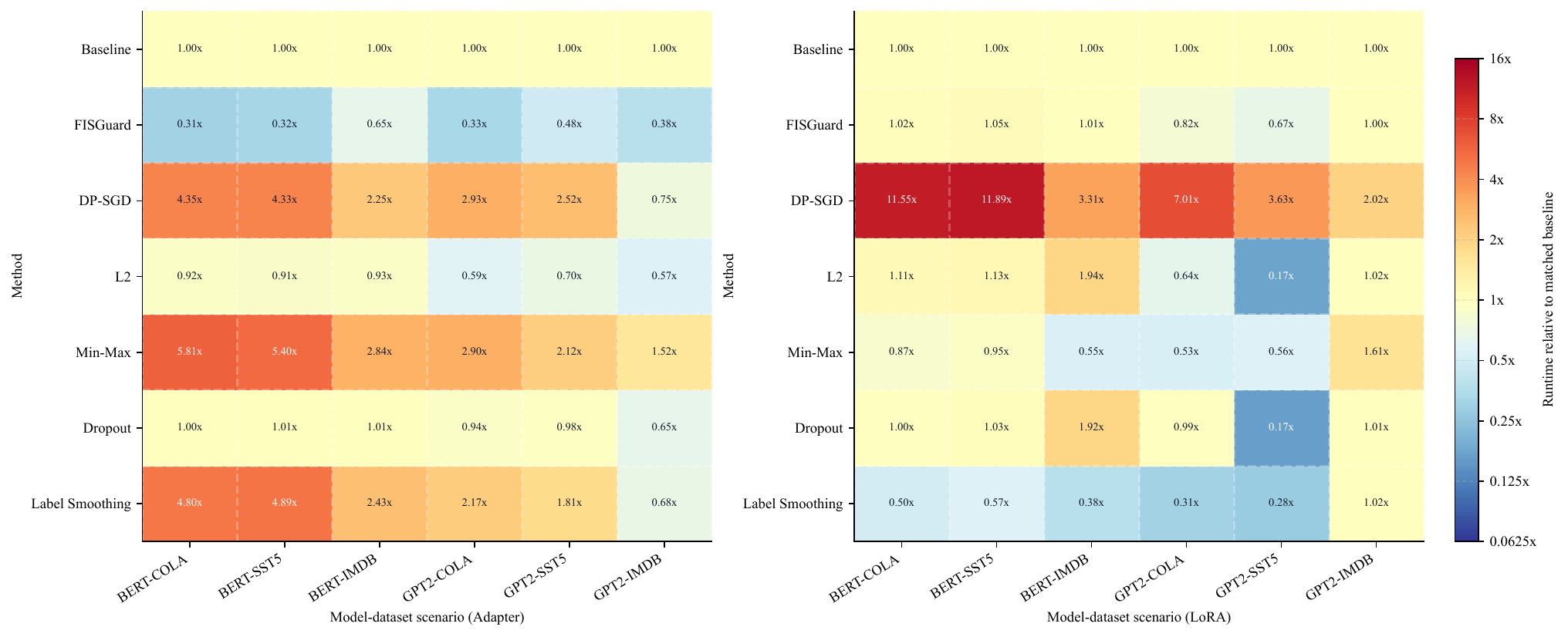}
	\caption{Training time comparison across different datasets under Adapter- and LoRA-based fine-tuning.}
	\label{fig:method_runtime_relative}
\end{figure}

\subsection{Defense Effectiveness and Model Utility}

We compare FISGuard against five representative defenses against membership inference attacks across three datasets, two LLMs, and two fine-tuning strategies, resulting in 12 experimental configurations. Figure~\ref{fig:overall_performance} summarizes their overall privacy and utility performance across these configurations. Detailed comparisons under different hyperparameter settings are provided in Appendix~\ref{app:detailed_results}.

\textbf{Privacy Protection.}
ProjRes performs membership inference using the projection residual score
$s(x)=\left\|h(x)-h(x)V_GV_G^{\top}\right\|_{1}$,
where $V_G$ denotes an orthonormal basis of the subspace induced by the observed gradient and $h(x)$ is the representation of candidate $x$. A smaller residual indicates that the candidate representation is better covered by the gradient subspace and is therefore more likely to be inferred as a member.

As shown in Fig.~\ref{fig:member_nonmember_score}, the undefended setting exhibits a clear separation between member and non-member scores. After applying FISGuard, the gap between the two distributions is substantially reduced and their overlap becomes much larger, indicating that the projection residual provides much weaker membership-discriminative information.

The AUC results in Fig.~\ref{fig:overall_performance} further confirm this observation. FISGuard produces an attack AUC distribution that is substantially more concentrated around 0.5 than those of the competing defenses. Across the 12 configurations, its attack AUC remains close to the random-guessing level in most cases, demonstrating consistent protection across different datasets, models, and fine-tuning strategies.

This behavior follows directly from the design of FISGuard. By freezing the input-side projection, the original high-dimensional private representations no longer directly contribute to the server-visible input-side gradient structure. Both member and non-member representations are instead mapped into the same fixed $r$-dimensional public subspace, which can be more sufficiently covered by the row space of the remaining observable gradients. As the coverage increases, the projection residuals of members and non-members become increasingly similar. When the gradient row space fully spans the projected space, both are completely covered, eliminating the residual difference exploited by ProjRes.

\textbf{Model Utility.}
The accuracy results in Fig.~\ref{fig:overall_performance} further show that the improved privacy protection of FISGuard does not incur substantial degradation in model utility. Its accuracy remains close to the undefended baseline and is generally more favorable than those of the competing defenses across the 12 configurations.

Unlike defenses that directly introduce noise or strong optimization constraints, FISGuard constructs its fixed input projection from public representations before private training. Specifically, singular value decomposition is applied to representations collected from public data, and the top-$r$ dominant representation directions are retained to form the fixed projection $Q$. These directions preserve the principal information encoded in the pretrained representation space. During private training, $Q$ remains frozen while only the output-side mapping $U$ is optimized, allowing the model to adapt the retained representations to downstream tasks while restricting the gradient information exploitable by ProjRes.

Taken together, these results show that FISGuard effectively suppresses the membership-discriminative signal exploited by ProjRes while preserving competitive downstream task performance, achieving a favorable privacy--utility trade-off across different datasets, models, and fine-tuning strategies.

\subsection{Computation Overhead Analysis}

As shown in Fig.\ref{fig:method_runtime_relative}, across different datasets, FISGuard introduces only limited computational overhead. Under Adapter-based fine-tuning, FISGuard achieves the lowest overall training time among the evaluated defense methods, with runtime remaining close to the undefended baseline. Under LoRA-based fine-tuning, its training time also stays close to the baseline, indicating that the additional privacy protection incurs only modest computational cost. This efficiency is mainly because the public-data-derived input projection is constructed before private training and remains frozen throughout fine-tuning, so no gradient computation or parameter update is required for this component. Moreover, FISGuard does not introduce additional per-step procedures such as gradient perturbation or adversarial optimization. Overall, the runtime results show that FISGuard provides effective privacy protection without substantially increasing the computational cost of parameter-efficient fine-tuning.

\section{Security Analysis}
\label{sec:security-analysis}

This section analyzes how fixed public-subspace reparameterization
affects ProjRes. The guarantee established below is specific to the
projection-residual score used by ProjRes. It is neither a
differential-privacy guarantee nor a universal guarantee against all
membership-inference attacks.

\subsection{Setting}
\label{sec:setting}

Consider a protected linear layer whose task-specific update is
reparameterized as \(\Delta W = UQ\), so that the output corresponding
to sample \(i\) is $y_i = W_0x_i + UQx_i$.
\(x_i \in \mathbb{R}^{d}\) is the high-dimensional
representation entering the layer, \(W_0\) is the frozen base weight,
\(Q \in \mathbb{R}^{r \times d}\) is a fixed input projection
constructed from public data, and \(U \in \mathbb{R}^{m \times r}\)
is the output mapping trained and uploaded by the client. We assume
that \(r<d\).
Define the projected representation as
\(z_i := Qx_i \in \mathbb{R}^{r}\).
Let \(\delta_i \in \mathbb{R}^{m}\) denote the backpropagated error
signal associated with sample \(i\). Absorbing any nonzero
batch-normalization factor into \(\delta_i\), the gradient of the loss
with respect to \(U\) is
$
G_U
:=
\nabla_U \mathcal{L}
=
\sum_{i=1}^{B}\delta_i z_i^{\top}
\in
\mathbb{R}^{m \times r}.
$
Because \(Q\) is excluded from the client-side optimization variables,
the client uploads neither a gradient nor a parameter difference for
\(Q\). The server therefore observes \(G_U\), rather than an
input-side update directly constructed from the original
high-dimensional representations \(x_i\).
For a candidate sample \(x\), let \(z:=Qx\). The ProjRes score
computed from the observed update \(G_U\) is
$
s_{\mathrm{PR}}(x;G_U)
:=
\left\|
z-\Pi_{\operatorname{row}(G_U)}z
\right\|_p,
$
where \(\Pi_{\operatorname{row}(G_U)}\) denotes the orthogonal
projection onto the row space of \(G_U\).

\subsection{Invalidation of the ProjRes Score}
\label{sec:projres-invalidation}

\begin{theorem}[ProjRes invalidation under a full-rank fixed subspace]
	\label{thm:projres-invalidation}
	Suppose that the output-side update observed by the server satisfies
	\(\operatorname{rank}(G_U)=r\). Then, for every candidate sample \(x\),
	regardless of whether \(x\) belongs to the client's training data,
	$s_{\mathrm{PR}}(x;G_U)=0$.
	Consequently, ProjRes cannot rank or classify members and non-members
	using the projection-residual score.
\end{theorem}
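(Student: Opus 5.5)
The plan is to reduce the statement to a linear-algebra fact about the row space of \(G_U\). By the setting in Section~\ref{sec:setting}, \(G_U \in \mathbb{R}^{m\times r}\), so \(\operatorname{row}(G_U)\) is a subspace of \(\mathbb{R}^{r}\). Its dimension equals \(\operatorname{rank}(G_U)\). Under the hypothesis \(\operatorname{rank}(G_U)=r\), this dimension equals that of the ambient space, so I will conclude \(\operatorname{row}(G_U)=\mathbb{R}^{r}\).

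Next I identify the projector. The orthogonal projection onto the whole space is the identity, so \(\Pi_{\operatorname{row}(G_U)}=I_r\). To make this concrete, I can write it via a thin SVD \(G_U = A S V_G^{\top}\) with \(V_G\in\mathbb{R}^{r\times r}\) having orthonormal columns. Since \(V_G\) is square, it is orthogonal, hence \(V_G V_G^{\top}=I_r\). This also matches the score \(h(x)-h(x)V_GV_G^{\top}\) used in the experiments.

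The score computation then follows. For any candidate \(x\), with \(z=Qx\in\mathbb{R}^r\), we get \(z-\Pi_{\operatorname{row}(G_U)}z=z-z=0\), and hence \(s_{\mathrm{PR}}(x;G_U)=\|0\|_p=0\) for every \(p\). Nothing in this step used membership, the construction of \(Q\), or the values \(\delta_i\). The score is therefore identically zero on the whole candidate domain.

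For the consequence, I will note two things:
\begin{itemize}
\item Any ranking or threshold rule applied to a constant score assigns the same output to members and non-members.
\item Under the usual tie-handling convention, the ROC curve is the diagonal and the AUC is \(1/2\).
\end{itemize}

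The only step needing care is the identification \(\operatorname{row}(G_U)=\mathbb{R}^r\). I must make sure the row space is taken in the \(r\)-dimensional coordinate space where \(z\) lives, not in \(\mathbb{R}^d\). The statement fixes exactly this convention: the score is computed on \(z=Qx\), not on \(x\). Beyond this, the argument is essentially immediate. I would add a remark that the rank hypothesis requires \(B\ge r\) and \(m\ge r\), but that is not part of the proof.
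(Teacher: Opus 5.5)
Your proposal is correct and follows the paper's proof. Both arguments note that $\operatorname{row}(G_U)\subseteq\mathbb{R}^r$ has dimension $r$ and hence equals $\mathbb{R}^r$, so $\Pi_{\operatorname{row}(G_U)}=I_r$ and the residual $z-z$ vanishes for every candidate regardless of membership. Your thin-SVD remark and the AUC $=1/2$ observation are harmless extras; the latter is exactly the content of the paper's subsequent corollary.
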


\begin{proof}
	Because \(G_U \in \mathbb{R}^{m \times r}\), its row space is a
	subspace of \(\mathbb{R}^{r}\), namely,
	$\operatorname{row}(G_U)\subseteq\mathbb{R}^{r}.$
	Under the condition \(\operatorname{rank}(G_U)=r\), the dimension of
	the row space equals the dimension of the entire projected
	representation space. Therefore,$\operatorname{row}(G_U)=\mathbb{R}^{r}.$
	Orthogonal projection onto all of \(\mathbb{R}^{r}\) is the identity
	map, so \(\Pi_{\operatorname{row}(G_U)}=I_r\).
	For any candidate sample \(x\), the projected representation \(z=Qx\)
	belongs to \(\mathbb{R}^{r}\). 
	$
	\Pi_{\operatorname{row}(G_U)}z
	=
	I_rz
	=
	z.
	$
	It follows that
	$
	s_{\mathrm{PR}}(x;G_U)
	=
	\left\|z-z\right\|_p
	=
	0.
	$
	This conclusion is independent of the membership status of \(x\).
	Thus, the ProjRes score collapses to a constant and contains no
	membership-discriminating information.
\end{proof}

\begin{corollary}[Zero membership advantage for projection-residual attacks]
	\label{cor:zero-membership-advantage}
	Let \(M\in\{0,1\}\) denote the membership variable, where \(M=1\)
	represents a member and \(M=0\) represents a non-member. Define the
	threshold-based ProjRes advantage as $\operatorname{Adv}_{\mathrm{PR}}
	:=
	\sup_{\tau\in\mathbb{R}}
	\Bigg|\Pr\!\left[
	s_{\mathrm{PR}}(x;G_U)\leq\tau
	\,\middle|\,
	M=1
	\right]
	\\-
	\Pr\!\left[
	s_{\mathrm{PR}}(x;G_U)\leq\tau
	\,\middle|\,
	M=0
	\right]
	\Bigg|.$
	Under the conditions of Theorem~\ref{thm:projres-invalidation},
	\(\operatorname{Adv}_{\mathrm{PR}}=0\). Under the standard convention
	that tied scores receive half credit, the corresponding ROC AUC is
	\(\operatorname{AUC}=1/2\).
\end{corollary}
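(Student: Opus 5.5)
The plan is to reduce everything to Theorem~\ref{thm:projres-invalidation}, which states that under \(\operatorname{rank}(G_U)=r\) the score \(s_{\mathrm{PR}}(x;G_U)\) equals \(0\) for every candidate \(x\). The score is therefore a degenerate random variable: conditioned on \(M=1\) or on \(M=0\), and whatever distribution over candidates, observed updates, or sampling of the evaluation set we use, it takes the value \(0\) with probability one. For this to hold we need the rank condition to hold almost surely over whatever randomness defines the probabilities. I will state this explicitly as the reading of ``under the conditions of Theorem~\ref{thm:projres-invalidation}''.

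For the advantage, I fix an arbitrary threshold \(\tau\in\mathbb{R}\) and split into two cases. If \(\tau\ge 0\), both conditional probabilities \(\Pr[s_{\mathrm{PR}}\le\tau\mid M=b]\) equal \(1\). If \(\tau<0\), both equal \(0\). In either case the absolute difference is \(0\), so the supremum over \(\tau\) is \(0\). Hence \(\operatorname{Adv}_{\mathrm{PR}}=0\). Equivalently, the two conditional CDFs coincide, both being the CDF of a point mass at \(0\).

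For the AUC, I will use the probabilistic definition with the stated tie convention. Draw an independent member \(x_1\) (with \(M=1\)) and a non-member \(x_0\) (with \(M=0\)). Because a smaller residual indicates membership, the AUC is
\[
\Pr\bigl[s_{\mathrm{PR}}(x_1)<s_{\mathrm{PR}}(x_0)\bigr]+\tfrac12\Pr\bigl[s_{\mathrm{PR}}(x_1)=s_{\mathrm{PR}}(x_0)\bigr].
\]
Both scores are identically \(0\), so the strict-inequality term vanishes and the tie term has probability \(1\). This gives \(\operatorname{AUC}=1/2\). The same conclusion holds for the empirical AUC on the balanced \(100/100\) evaluation set, since every member/non-member pair is tied. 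I would also note that the ROC curve reduces to the two points \((0,0)\) and \((1,1)\), whose chord has area \(1/2\).

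There is no real obstacle here; the only delicate point is bookkeeping about the probability space. If \(G_U\) is random, for example depending on the sampled batch, the rank-\(r\) hypothesis must hold almost surely, or the statement must be made conditional on that event. The tie convention must also be fixed before computing the AUC, since without half credit for ties the AUC would be \(0\).
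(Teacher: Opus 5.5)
Your proposal is correct and matches the paper's proof. Both invoke Theorem~\ref{thm:projres-invalidation} to make the score identically zero, conclude that the two conditional threshold probabilities coincide for every \(\tau\) (so \(\operatorname{Adv}_{\mathrm{PR}}=0\)), and observe that every member/non-member pair is tied, giving \(\operatorname{AUC}=1/2\) under half credit. Your case split on the sign of \(\tau\) and the caveat that the rank condition must hold almost surely are bookkeeping the paper leaves implicit; one small quibble is that without the half-credit convention the AUC would be \(0\) or \(1\), depending on whether ties count as losses or wins, not necessarily \(0\).
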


\begin{proof}
	By Theorem~\ref{thm:projres-invalidation},
	\(s_{\mathrm{PR}}(x;G_U)=0\) for both members and non-members.
	Therefore, for every threshold \(\tau\),
	\[
	\Pr\!\left[
	s_{\mathrm{PR}}(x;G_U)\leq\tau
	\,\middle|\,
	M=1
	\right]
	=
	\Pr\!\left[
	s_{\mathrm{PR}}(x;G_U)\leq\tau
	\,\middle|\,
	M=0
	\right].
	\]
	The two conditional score distributions are identical, so their
	probability difference is zero for every \(\tau\). Consequently,
	\(\operatorname{Adv}_{\mathrm{PR}}=0\).
	Moreover, all candidates receive the same score.  ProjRes
	produces no nontrivial ranking between members and non-members. With
	the standard treatment of tied scores, its ROC AUC is \(1/2\).
\end{proof}

\subsection{A Sufficient Condition for Full Rank}
\label{sec:full-rank-condition}
Define the projected-representation matrix \(Z\) and the error-signal
matrix \(D\) as
\[
Z :=
\begin{bmatrix}
	z_1 & z_2 & \cdots & z_B
\end{bmatrix}
\in \mathbb{R}^{r \times B},
\]
\[
D :=
\begin{bmatrix}
	\delta_1 & \delta_2 & \cdots & \delta_B
\end{bmatrix}
\in \mathbb{R}^{m \times B}.
\]
The output-side gradient can then be written as \(G_U=DZ^{\top}\).

\begin{proposition}[Sufficient condition for a full-rank output update]
	\label{prop:sufficient-full-rank}
	Suppose that \(\operatorname{rank}(Z)=r\) and
	$
	\ker(D)\cap\operatorname{col}(Z^{\top})=\{0\}.
	$
	Then \(\operatorname{rank}(G_U)=r\).
\end{proposition}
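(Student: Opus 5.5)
We will prove that $G_U=DZ^{\top}$ is injective when regarded as a map on $\mathbb{R}^{r}$, and then apply rank--nullity. The matrix $G_U$ is $m\times r$, so $\operatorname{rank}(G_U)=r$ holds exactly when $\ker(G_U)=\{0\}$ in $\mathbb{R}^{r}$. The plan is therefore to take an arbitrary $v\in\mathbb{R}^{r}$ with $G_Uv=0$ and show that $v=0$.

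The argument has two steps.

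\textbf{Step 1: pass to $w=Z^{\top}v$.} Given $G_Uv=D(Z^{\top}v)=0$, set $w:=Z^{\top}v\in\mathbb{R}^{B}$. Then $Dw=0$, so $w\in\ker(D)$. By construction $w$ also lies in $\operatorname{col}(Z^{\top})$. The hypothesis $\ker(D)\cap\operatorname{col}(Z^{\top})=\{0\}$ then forces $w=0$, that is, $Z^{\top}v=0$.

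\textbf{Step 2: use $\operatorname{rank}(Z)=r$.} Because $Z\in\mathbb{R}^{r\times B}$ has rank $r$, its transpose $Z^{\top}\in\mathbb{R}^{B\times r}$ has full column rank $r$. Hence $\ker(Z^{\top})=\{0\}$, and $Z^{\top}v=0$ gives $v=0$.

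Combining the two steps, $\ker(G_U)=\{0\}$, so $\operatorname{rank}(G_U)=r-\dim\ker(G_U)=r$.

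An equivalent formulation, which we may state as a remark, is this: $\operatorname{rank}(DZ^{\top})=\operatorname{rank}(Z^{\top})-\dim\bigl(\ker(D)\cap\operatorname{col}(Z^{\top})\bigr)$. This follows from applying rank--nullity to $D$ restricted to $\operatorname{col}(Z^{\top})$. With $\operatorname{rank}(Z^{\top})=r$ and a trivial intersection, it gives $\operatorname{rank}(DZ^{\top})=r$ directly.

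There is no substantive obstacle in this argument. The only point needing care is that $\operatorname{rank}(Z)=r$ is used to conclude that $Z^{\top}$ is injective on $\mathbb{R}^{r}$, and not that $Z$ is injective. One should also check that the hypotheses implicitly require $B\ge r$, since otherwise $\operatorname{rank}(Z)=r$ cannot hold, and $m\ge r$, since otherwise $\operatorname{rank}(G_U)\le m<r$ would contradict the conclusion. No further assumptions are needed beyond those stated, so the conclusion may be fed directly into Theorem~\ref{thm:projres-invalidation}.
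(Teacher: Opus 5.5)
Your proof is correct and is essentially the paper's argument. You take $v$ with $DZ^{\top}v=0$, place $Z^{\top}v$ in $\ker(D)\cap\operatorname{col}(Z^{\top})$ to get $Z^{\top}v=0$, then use the full column rank of $Z^{\top}$ to conclude $v=0$ and hence injectivity. Your rank--nullity remark and your observation that the hypotheses force $r\le\min\{m,B\}$ are also correct, but the proof does not need them.
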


\begin{proof}
	Consider any vector \(a\in\mathbb{R}^{r}\) satisfying
	\(DZ^{\top}a=0\). It follows that \(Z^{\top}a\in\ker(D)\).
	By construction, \(Z^{\top}a\in\operatorname{col}(Z^{\top})\).
	Therefore,
	$
	Z^{\top}a
	\in
	\ker(D)\cap\operatorname{col}(Z^{\top}).
	$
	The assumed intersection condition implies that \(Z^{\top}a=0\).
	Because \(\operatorname{rank}(Z)=r\), the matrix \(Z^{\top}\) has full
	column rank and therefore has a trivial null space. \(a=0\).
	Thus, the linear map
	\(DZ^{\top}\colon\mathbb{R}^{r}\to\mathbb{R}^{m}\) is injective.
	Consequently,
	$
	\operatorname{rank}(DZ^{\top})=r.
	$
	Since \(G_U=DZ^{\top}\), we obtain
	\(\operatorname{rank}(G_U)=r\).
\end{proof}

\subsection{Interpretation and Design Condition}
\label{sec:design-condition}

Proposition~\ref{prop:sufficient-full-rank} has a direct geometric
interpretation. The projected batch representations must span all
\(r\) directions of the fixed public subspace, while the
backpropagated error signals must not eliminate any nonzero direction
in that span.
Because
$
\operatorname{rank}(G_U)
\leq
\min\{m,r,B\},
$
the full-rank condition \(\operatorname{rank}(G_U)=r\) requires
\(r\leq m\) and \(r\leq B\).
For sequence models, \(B\) may be replaced by the effective number of
linearly independent representations contained in the observed
update, denoted by \(B_{\mathrm{eff}}\). A practical dimensional
condition is therefore
$
r\leq\min\{m,B_{\mathrm{eff}}\}.
$
This dimensional condition is necessary for full rank but is not, by
itself, sufficient. The spanning and non-cancellation conditions in
Proposition~\ref{prop:sufficient-full-rank} must also hold.


\section{Conclusion}

In this paper, we presented FISGuard, a lightweight defense against the ProjRes membership inference attack in federated parameter-efficient fine-tuning. FISGuard constructs a fixed low-dimensional input subspace from independent public data and restricts trainable updates to operate within this subspace, thereby reducing the projection-residual discrepancy between members and non-members while preserving the information required for downstream tasks. 

We evaluated FISGuard across three NLP datasets, two large language models, and two fine-tuning strategies, and compared it with five representative membership-inference defenses. The results show that FISGuard reduces the AUC of the ProjRes attack to near the random-guessing level in most settings, while maintaining downstream task performance close to that of the undefended model and introducing only limited computational overhead. Our analysis further characterizes the conditions under which the server-observable update span fully covers the fixed representation subspace, explaining why the projection-residual signal exploited by ProjRes can be substantially weakened. Overall, FISGuard provides an effective privacy--utility trade-off against projection-residual-based membership inference without requiring heavy gradient perturbation or regularization.

\bibliographystyle{plainurl}
\bibliography{references}

@inproceedings{HLL2025,
	title={Towards Label-Only membership inference attack against pre-trained large language models},
	author={He, Yu and Li, Boheng and Liu, Liu and Ba, Zhongjie and Dong, Wei and Li, Yiming and Qin, Zhan and Ren, Kui and Chen, Chun},
	booktitle={34th USENIX Security Symposium (USENIX Security 25)},
	pages={1609--1628},
	year={2025}
}

@article{DCL2026,
	title={Toward Efficient Membership Inference Attacks against Federated Large Language Models: A Projection Residual Approach},
	author={Deng, Guilin and Chen, Silong and Luo, Yuchuan and Liu, Yi and Wang, Songlei and Cai, Zhiping and Liu, Lin and Jia, Xiaohua and Fu, Shaojing},
	journal={arXiv preprint arXiv:2604.21197},
	year={2026}
}

@inproceedings{SSS2017,
	title={Membership inference attacks against machine learning models},
	author={Shokri, Reza and Stronati, Marco and Song, Congzheng and Shmatikov, Vitaly},
	booktitle={2017 IEEE symposium on security and privacy (SP)},
	pages={3--18},
	year={2017},
	organization={IEEE}
}

@inproceedings{DLZ2025,
	title={Privacy in fine-tuning large language models: Attacks, defenses, and future directions},
	author={Du, Hao and Liu, Shang and Zheng, Lele and Cao, Yang and Nakamura, Atsuyoshi and Chen, Lei},
	booktitle={Pacific-Asia Conference on Knowledge Discovery and Data Mining},
	pages={326--344},
	year={2025},
	organization={Springer}
}

@article{HSS2022,
	title={Membership inference attacks on machine learning: A survey},
	author={Hu, Hongsheng and Salcic, Zoran and Sun, Lichao and Dobbie, Gillian and Yu, Philip S and Zhang, Xuyun},
	journal={ACM Computing Surveys (CSUR)},
	volume={54},
	number={11s},
	pages={1--37},
	year={2022},
	publisher={ACM New York, NY}
}

@inproceedings{MIH2021,
	title={Privacy regularization: Joint privacy-utility optimization in LanguageModels},
	author={Mireshghallah, Fatemehsadat and Inan, Huseyin and Hasegawa, Marcello and R{\"u}hle, Victor and Berg-Kirkpatrick, Taylor and Sim, Robert},
	booktitle={Proceedings of the 2021 Conference of the North American Chapter of the Association for Computational Linguistics: Human Language Technologies},
	pages={3799--3807},
	year={2021}
}

@incollection{PL2002,
	title={Early stopping-but when?},
	author={Prechelt, Lutz},
	booktitle={Neural Networks: Tricks of the trade},
	pages={55--69},
	year={2002},
	publisher={Springer}
}

@article{CWH2020,
	title={Understanding gradient clipping in private sgd: A geometric perspective},
	author={Chen, Xiangyi and Wu, Steven Z and Hong, Mingyi},
	journal={Advances in neural information processing systems},
	volume={33},
	pages={13773--13782},
	year={2020}
}

@inproceedings{DC2006,
	title={Differential privacy},
	author={Dwork, Cynthia},
	booktitle={International colloquium on automata, languages, and programming},
	pages={1--12},
	year={2006},
	organization={Springer}
}

@article{BIK2016,
	title={Practical secure aggregation for federated learning on user-held data},
	author={Bonawitz, Keith and Ivanov, Vladimir and Kreuter, Ben and Marcedone, Antonio and McMahan, H Brendan and Patel, Sarvar and Ramage, Daniel and Segal, Aaron and Seth, Karn},
	journal={arXiv preprint arXiv:1611.04482},
	year={2016}
}

@inproceedings{KQL2024,
	title={Federatedscope-llm: A comprehensive package for fine-tuning large language models in federated learning},
	author={Kuang, Weirui and Qian, Bingchen and Li, Zitao and Chen, Daoyuan and Gao, Dawei and Pan, Xuchen and Xie, Yuexiang and Li, Yaliang and Ding, Bolin and Zhou, Jingren},
	booktitle={Proceedings of the 30th ACM SIGKDD Conference on Knowledge Discovery and Data Mining},
	pages={5260--5271},
	year={2024}
}

@inproceedings{YZL2020,
	title={Privacy-preserving in defending against membership inference attacks},
	author={Ying, Zuobin and Zhang, Yun and Liu, Ximeng},
	booktitle={Proceedings of the 2020 workshop on privacy-preserving machine learning in practice},
	pages={61--63},
	year={2020}
}

@inproceedings{NSH2018,
	title={Machine learning with membership privacy using adversarial regularization},
	author={Nasr, Milad and Shokri, Reza and Houmansadr, Amir},
	booktitle={Proceedings of the 2018 ACM SIGSAC conference on computer and communications security},
	pages={634--646},
	year={2018}
}

@inproceedings{LLR2021,
	title={Membership inference attacks and defenses in classification models},
	author={Li, Jiacheng and Li, Ninghui and Ribeiro, Bruno},
	booktitle={Proceedings of the Eleventh ACM Conference on Data and Application Security and Privacy},
	pages={5--16},
	year={2021}
}

@article{ASH2025,
	title={MemberShield: A framework for federated learning with membership privacy},
	author={Ahmed, Faisal and Sanchez, David and Haddi, Zouhair and Domingo-Ferrer, Josep},
	journal={Neural Networks},
	volume={181},
	pages={106768},
	year={2025},
	publisher={Elsevier}
}

@article{JE2022,
	title={MIAShield: Defending membership inference attacks via preemptive exclusion of members},
	author={Jarin, Ismat and Eshete, Birhanu},
	journal={arXiv preprint arXiv:2203.00915},
	year={2022}
}

@article{HLL2022,
	title={Defending against membership inference attacks with high utility by GAN},
	author={Hu, Li and Li, Jin and Lin, Guanbiao and Peng, Shiyu and Zhang, Zhenxin and Zhang, Yingying and Dong, Changyu},
	journal={IEEE Transactions on Dependable and Secure Computing},
	volume={20},
	number={3},
	pages={2144--2157},
	year={2022},
	publisher={IEEE}
}

@article{BLZ2026,
	title={United We Defend: Collaborative Membership Inference Defenses in Federated Learning},
	author={Bai, Li and Liu, Junxu and Zhang, Sen and Zhang, Xinwei and Ye, Qingqing and Hu, Haibo},
	journal={arXiv preprint arXiv:2601.06866},
	year={2026}
}

@article{NHD2020,
	title={Local and central differential privacy for robustness and privacy in federated learning},
	author={Naseri, Mohammad and Hayes, Jamie and De Cristofaro, Emiliano},
	journal={arXiv preprint arXiv:2009.03561},
	year={2020}
}

@article{HWL2024,
	title={Privacy risks of federated knowledge graph embedding: New membership inference attacks and personalized differential privacy defense},
	author={Hu, Yuke and Wang, Yang and Lou, Jian and Liang, Wei and Wu, Ruofan and Wang, Weiqiang and Li, Xiaochen and Liu, Jinfei and Qin, Zhan},
	journal={IEEE Transactions on Dependable and Secure Computing},
	volume={22},
	number={3},
	pages={2788--2805},
	year={2024},
	publisher={IEEE}
}

@article{ZLS2025,
	title={ADPF: Anti-inference differentially private protocol for federated learning},
	author={Zhao, Zirun and Lin, Zhaowen and Sun, Yi},
	journal={Computer Networks},
	volume={261},
	pages={111130},
	year={2025},
	publisher={Elsevier}
}

@article{SWS2025,
	title={Towards the flatter landscape and better generalization in federated learning under client-level differential privacy},
	author={Shi, Yifan and Wei, Kang and Shen, Li and Liu, Yingqi and Wang, Xueqian and Yuan, Bo and Tao, Dacheng},
	journal={IEEE Transactions on Pattern Analysis and Machine Intelligence},
	year={2025},
	publisher={IEEE}
}

@article{YCW2025,
	title={DP-FedPUAC: Federated learning with differential privacy via adaptive gradient clipping and local iteration optimization},
	author={Yuan, Jiangyong and Chen, Yong and Wang, Zheyi and Wang, ChenXiao and Hu, Xiaoyue and Zeng, Zihao},
	journal={Information Sciences},
	pages={122981},
	year={2025},
	publisher={Elsevier}
}

@inproceedings{QWH2024,
	title={Towards the robustness of differentially private federated learning},
	author={Qi, Tao and Wang, Huili and Huang, Yongfeng},
	booktitle={Proceedings of the AAAI Conference on Artificial Intelligence},
	volume={38},
	number={18},
	pages={19911--19919},
	year={2024}
}

@inproceedings{AKR2025,
	title={Little is enough: Boosting privacy by sharing only hard labels in federated semi-supervised learning},
	author={Abourayya, Amr and Kleesiek, Jens and Rao, Kanishka and Ayday, Erman and Rao, Bharat and Webb, Geoffrey I and Kamp, Michael},
	booktitle={Proceedings of the AAAI conference on artificial intelligence},
	volume={39},
	number={15},
	pages={15293--15301},
	year={2025}
}

@inproceedings{YYH2023,
	title={Fortifying federated learning against membership inference attacks via client-level input perturbation},
	author={Yang, Yuchen and Yuan, Haolin and Hui, Bo and Gong, Neil and Fendley, Neil and Burlina, Philippe and Cao, Yinzhi},
	booktitle={2023 53rd Annual IEEE/IFIP International Conference on Dependable Systems and Networks (DSN)},
	pages={288--301},
	year={2023},
	organization={IEEE}
}

@article{SMX2024,
	title={MemDefense: defending against membership inference attacks in IoT-based federated learning via pruning perturbations},
	author={Shen, Meng and Meng, Jin and Xu, Ke and Yu, Shui and Zhu, Liehuang},
	journal={IEEE Transactions on Big Data},
	volume={11},
	number={5},
	pages={2135--2147},
	year={2024},
	publisher={IEEE}
}

@inproceedings{YFF2022,
	title={An accuracy-lossless perturbation method for defending privacy attacks in federated learning},
	author={Yang, Xue and Feng, Yan and Fang, Weijun and Shao, Jun and Tang, Xiaohu and Xia, Shu-Tao and Lu, Rongxing},
	booktitle={Proceedings of the ACM Web Conference 2022},
	pages={732--742},
	year={2022}
}

@inproceedings{HCB2021,
	title={Scaling federated learning for fine-tuning of large language models},
	author={Hilmkil, Agrin and Callh, Sebastian and Barbieri, Matteo and S{\"u}tfeld, Leon Ren{\'e} and Zec, Edvin Listo and Mogren, Olof},
	booktitle={International Conference on Applications of Natural Language to Information Systems},
	pages={15--23},
	year={2021},
	organization={Springer}
}

@article{LM2020,
	title={Federated pretraining and fine tuning of bert using clinical notes from multiple silos},
	author={Liu, Dianbo and Miller, Tim},
	journal={arXiv preprint arXiv:2002.08562},
	year={2020}
}

\appendices
\section{Detailed Results}
\label{app:detailed_results}

While the main text provides an aggregate comparison across the 12 experimental configurations, this section presents detailed results under the different hyperparameter settings specified in Section~\ref{sec:hy_setting}. Specifically, we report the ProjRes attack AUC and the corresponding downstream model accuracy as the defense-specific hyperparameters vary across all combinations of three datasets, two LLMs, and two fine-tuning strategies. For FISGuard, the defense parameter is the rank $r$, which determines the dimensionality of the fixed public subspace constructed from the top-$r$ dominant directions of public representations. These detailed results provide a more fine-grained view of how different defense parameters affect privacy protection and model utility.

\textbf{Attack AUC.} 
Figure~\ref{fig:auc_all_scenarios} presents the ProjRes attack AUC under the complete set of evaluated hyperparameter settings. The results provide a detailed comparison of how the effectiveness of each defense changes with its defense-specific parameter across different datasets, models, and fine-tuning strategies. For conventional defenses, varying the amount of noise, regularization strength, adversarial penalty, dropout probability, or label-smoothing strength changes the extent to which the original optimization process is perturbed, which in turn affects their ability to suppress the membership signal exploited by ProjRes.

For FISGuard, varying $r$ directly changes the dimensionality of the fixed subspace in which the remaining observable gradients are formed. Across the evaluated settings, FISGuard keeps the ProjRes AUC close to the random-guessing level in most configurations, further supporting the aggregate results reported in the main text. The effect of $r$ also admits a direct geometric interpretation. A smaller $r$ compresses both member and non-member representations into a lower-dimensional shared space, making a larger portion of this space easier to cover with the row space of the observable gradient. As this coverage increases, member and non-member representations share increasingly overlapping covered components, and the difference between their projection residuals correspondingly decreases. In the limiting case where the gradient row space spans the entire $r$-dimensional projected space, both types of candidates are fully covered and the ProjRes residual becomes identical regardless of membership. Conversely, increasing $r$ retains a larger representation space but also increases the dimensionality that must be covered by the observable gradient span. 

\textbf{Model Accuracy.} 
Figure~\ref{fig:model_accuracy_all_scenarios} reports the downstream model accuracy corresponding to the same hyperparameter sweeps. For several existing defenses, changing the defense strength directly alters the degree of perturbation or regularization applied during training. Stronger perturbation can suppress membership leakage more effectively, but may simultaneously interfere with normal optimization and reduce downstream task performance.

The role of the hyperparameter $r$ in FISGuard is different. Rather than controlling the magnitude of noise or gradient perturbation, $r$ determines how many dominant representation directions extracted from public data are retained in the fixed input subspace. The projection matrix $Q$ is constructed from the top-$r$ right singular vectors of the public representation matrix, such that the retained dimensions correspond to the principal directions of variation in the pretrained representation space. Increasing $r$ preserves a larger portion of the original representation space and provides the trainable output-side mapping with richer information for downstream adaptation, whereas reducing $r$ imposes stronger dimensionality reduction. Therefore, $r$ directly governs the balance between retaining task-relevant representation information and reducing the dimensionality of the observable space exploited by ProjRes. The accuracy results across different values of $r$ empirically illustrate how this representation-space constraint affects downstream model utility.

\begin{figure*}[htbp]
	\centering
	\includegraphics[width=\textwidth]
	{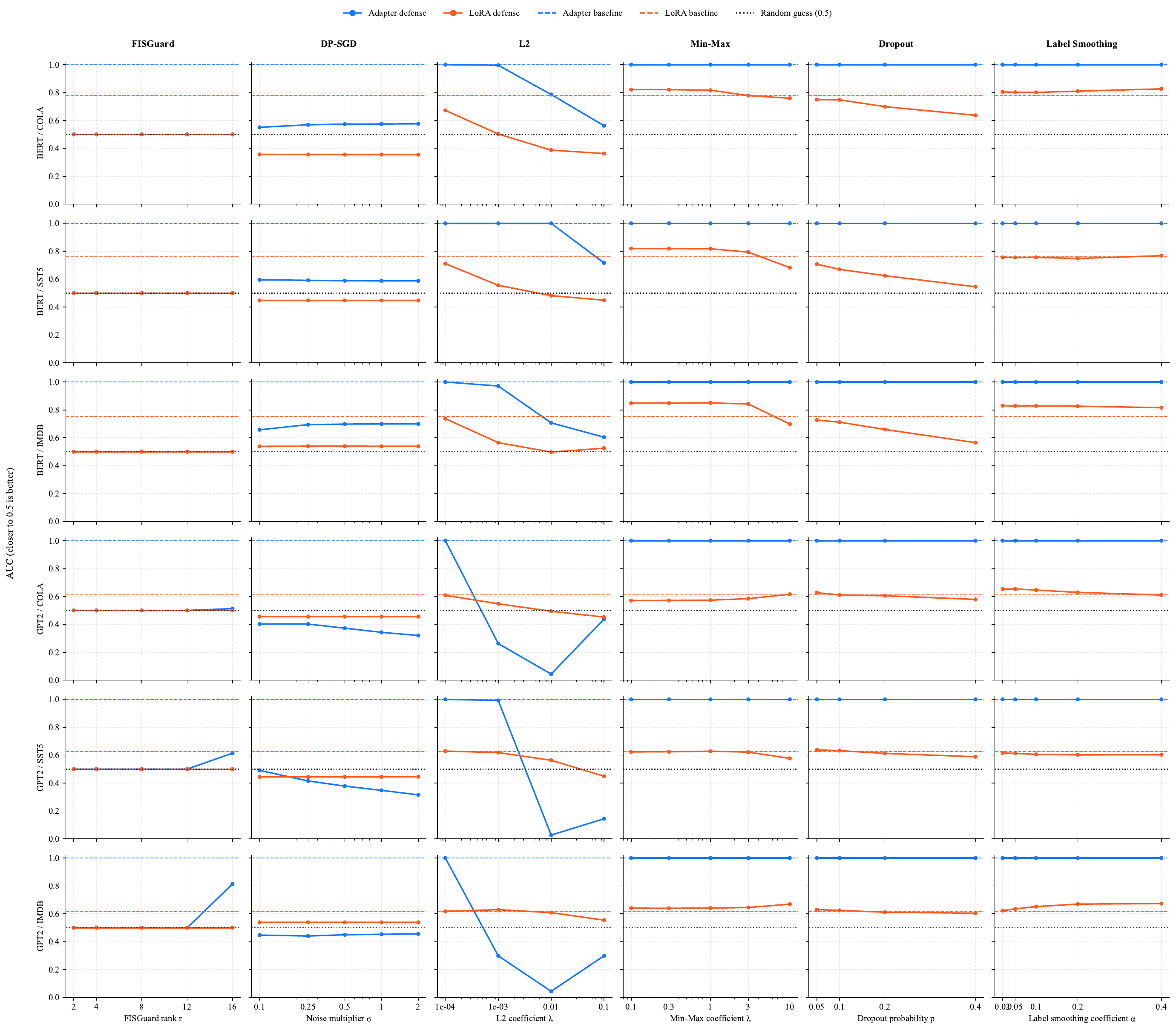}
	\caption{Attack AUC of ProjRes under different defense methods across three datasets, two LLMs, and two fine-tuning strategies.}
	\label{fig:auc_all_scenarios}
\end{figure*}

\begin{figure*}[htbp]
	\centering
	\includegraphics[width=\textwidth]
	{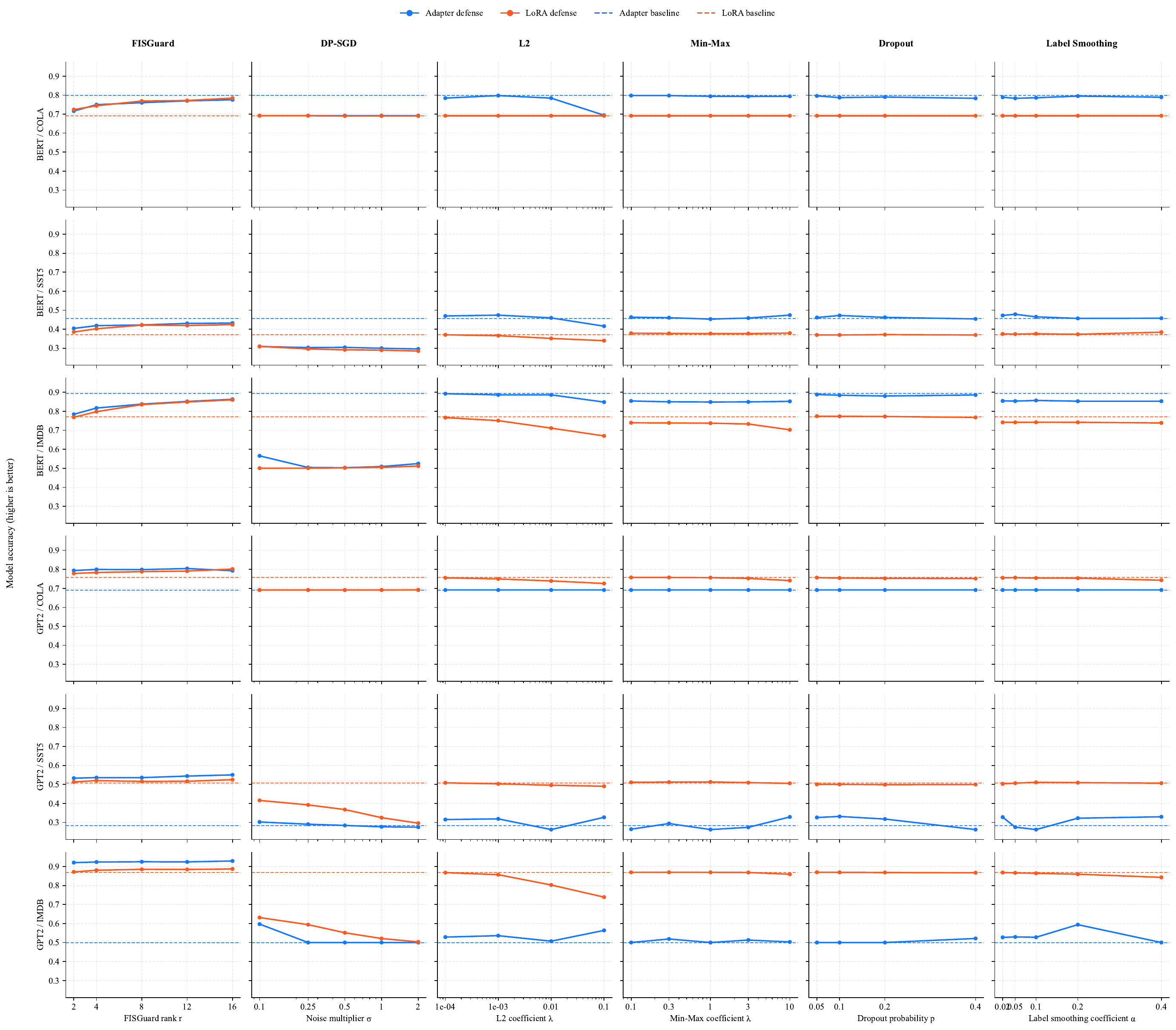}
	\caption{Model accuracy under different defense methods across three datasets, two LLMs, and two fine-tuning strategies.}
	\label{fig:model_accuracy_all_scenarios}
\end{figure*}



%

\end{document}